  \mathchardef\mathcomma\mathcode`\,
\begin{document}
\title{Preventive Model-based Verification and Repairing for SDN Requests
}
%
%
\author%
{%
Igor Burdonov\inst{1} \and 
Alexandre Kossachev\inst{1} \and 
Nina Yevtushenko \inst{1,2} \and 
Jorge L\'opez\inst{3,4} \and 
Natalia Kushik\inst{3} \and
Djamal Zeghlache \inst{3}
}
\authorrunning{I. Burdonov et al.}
%
\institute
{
Ivannikov Institute for System Programming of the Russian Academy of Sciences, Moscow, Russia \and
National Research University Higher School of Economics, Moscow, Russia \and
SAMOVAR, CNRS, T\'{e}l\'{e}com SudParis, Institut Polytechnique de Paris, \'{E}vry, France \and
Airbus Defense and Space, 1 Boulevard Jean Moulin, \'Elancourt, France
\email{\{igor,kos,evtushenko\}@ispras.ru, jorge.lopez-c@airbus.com, \{natalia.kushik,djamal.zeghlache\}@telecom-sudparis.eu}\\
}
\maketitle              
\begin{abstract}
Software Defined Networking~(SDN) is a novel network management technology, which currently attracts a lot of attention due to the provided capabilities. Recently, different works have been devoted to testing / verifying the (correct) configurations of SDN \emph{data planes}. In general, SDN forwarding devices (e.g., switches) route (steer) traffic according to the configured \emph{flow rules}; the latter identifies the set of virtual paths implemented in the data plane. In this paper, we propose a novel preventive approach for verifying that no misconfigurations (e.g., infinite loops), can occur given the requested set of paths. We discuss why such verification is essential, namely, how, when synthesizing a set of data paths, other \emph{not requested and undesired} data paths (including loops) may be unintentionally configured. Furthermore, we show that for some cases the requested set of paths cannot be implemented without adding such undesired behavior, i.e., only a superset of the requested set can be implemented. Correspondingly, we present a verification technique for detecting such issues of potential misconfigurations and estimate the complexity of the proposed method; its polynomial complexity highlights the applicability of the obtained results. Finally, we propose a technique for debugging and repairing a set of paths in such a way that the \emph{corrected} set does not induce undesired paths into the data plane, if the latter is possible.

\keywords{Software Defined Networking \and Verification \and Repairing \and  Graph paths.}
\end{abstract}
\section{Introduction}\label{sec:intro}
Traditional networks have currently evolved. One of the technologies that contributes to this evolution is the Software Defined Networking (SDN) paradigm, that allows implementing various \emph{data paths} utilizing the common resources and control principles. When using SDN technology the network entities are managed through the controller that works independently of the network equipment and is `responsible' for pushing the necessary rules to the forwarding devices (e.g., switches) \cite{sdnchallenges}. As a result, SDN provides agile controllability and observability by separating the control and data planes. 

To guarantee the requested network is configured correctly, SDN components and compositions need to be thoroughly tested and verified. However, even if the rules are pushed to each switch as requested by the controller, additional verification of the data plane still needs to be performed. For example, one needs to verify i) the absence of loops and packet loss, as well as ii) the security and access control issues. The works on such data plane verification have been presented before (see Section~\ref{sec:related}), moreover, we note that these challenges have been largely investigated in the past decade. Nevertheless, existing approaches often rely on a current network configuration, i.e., the rules have been already pushed to the switches while in this paper, we claim that an efficient verification can be performed before. In particular, we propose to analyze the paths to be configured as it is highly probable that the loops and/or access control issues are not induced through the actual path implementation but rather arise from the conflicting user requests. 

More precisely, in this paper, we propose a novel preventive model-based approach for verifying certain network properties. Indeed, given the set of paths to be implemented on the data plane for connecting appropriate hosts, if this set is not consistent or can lead to potential loops then its implementation should be avoided. Let $P$ be a set of paths which should be implemented on the data plane for packets of a given traffic type. The set $P$ should be `inspected' before its actual implementation, first to assure that all the paths of the set $P$ are edge simple (proves the correctness of the path definition) and second whether it is possible to precisely implement the set $P$ on the data plane or there will be additional (unintended) paths implemented? In the latter case, it can happen that there are implemented paths which are not edge simple and thus, a loop for packets of a given traffic type can occur. In this paper, we answer the above question by establishing the corresponding necessary and sufficient conditions. In fact, we show that given a traffic type which is defined by the packet headers (packets with the same traffic type follow the same data paths) and a set of (requested) paths $P$, the implementation of $P$ can induce new paths appearing on the data plane, and moreover, if all the paths of $P$ are edge simple (no loops should occur) it does not guarantee the absence of potential cycles on the data plane. Indeed, the criterion for the absence of those relies on the property of the set $P$ to be arc closed (see Section~\ref{sec:impelementing}). Such criterion as well as the preventive verification method on its basis, form the main contributions of the paper. Note that, our preliminary experimental results with rather small topology built over the Onos controller and Open vSwitches confirm the necessity of such preventive verification; otherwise, the packets generated at a certain host can go into infinite loop and can simply flood the network. Together with the data path verification approach we also discuss the possibility of an automatic debugging and repairing of a set $P$ that did not pass the verification. The latter contribution of the paper is a technique for the modification of the set of paths $P$ in such a way, that the resulting set of paths becomes arc closed (and thus safe to implement). For both, verification and debugging / repairing approaches their related complexity is discussed.  

The structure of the paper is as follows. Section~\ref{sec:related} briefly summarizes the related work in the area of SDN data plane verification w.r.t. various network properties. Section~\ref{sec:prelim} presents the necessary background. Section~\ref{sec:impelementing} discusses the possibility of inducing undesired paths on the data plane that can cause, for example, infinite cycles. Correspondingly, the proposed preventive verification approach for the set of paths $P$, together with the criterion for the absence of undesired links and related complexity analysis is presented in Section~\ref{sec:checking}. Automatic debugging and repairing of the set of paths for which the verification failed, is proposed in Section~\ref{sec:debugging}. Section~\ref{sec:conc} concludes the paper.

\section{Related work}\label{sec:related}
A number of (recent) works have been devoted to verification and testing of an SDN data plane and related data paths configured on the data plane. Note that these works can be intuitively split into several groups. The first group focuses on the application of formal verification and model checking approaches to data plane verification or forwarding devices in isolation; in this case, \emph{classical} networks (not necessarily SDN) with related access control, security and other network properties are considered. Approaches of the second group tend to focus on \emph{active} testing of a data plane via corresponding traffic generation and monitoring of the forwarding behavior of switches of interest. There have been also a number of attempts of the application of model based testing techniques to various SDN components and in particular, to the data plane.

As techniques of the first group generally employ formal verification and model checking strategies, they mostly differ in the underlying formalism utilized for describing the specified behavior and related properties. For that matter, there have been considered Boolean functions and their satisfiability \cite{debugging}, symbolic model checking / execution and SMT solving \cite{canini2012nice, dobrescu2013toward} as well as algebra of sets \cite{verifheadersets}. Several properties of the data plane can be checked in this case, such as for example, reachability issues, absence of loops, etc. When verifying the behavior of forwarding devices in isolation, symbolic execution has been also employed. In fact, the problem can be solved via corresponding static analysis when the network device is implemented in the programming language (for example, P4) \cite{p4}. 

Approaches of the second group have been largely investigated, for example in \cite{autotestpacketgen,buzz,testdrivensdn,JJD19}. In automatic traffic generation, the packets / flows to be sent through the switches are generated at hosts in an active mode such that specific network failures can be captured when monitoring the data plane.

Existing model based testing techniques either consider a given SDN component, such as for example an SDN enabled switch \cite{yao2014formal, ICTSS18} or an SDN framework as a whole can be tested \cite{enase18, EWDTS18} and in this case, an appropriate fault model can be used / proposed.

Note that the authors are not aware of the (preventive) verification approaches applied to SDN when the specification is given as a set of paths to be implemented. Such verification should be performed beforehand, i.e., before the rules are pushed to the switches and at the same time, further network updates should be also verified not to bring undesired paths. On the other hand, we are not aware of any works devoted to data paths repairing in the context of SDN, and in this paper, we address the aforementioned challenges.

\section{Preliminaries}\label{sec:prelim}
Software Defined Networking~(SDN) is a networking paradigm that consists in separating the control and data plane layers \cite{sdn}. With a centralized SDN controller, SDN applications can automatically re-configure the SDN data plane. SDN-enabled forwarding devices (the components of the data plane) steer (route / forward) the incoming network packets based on so-called flow rules installed by the SDN applications (through the controller). A flow rule consists of three main (functional) parts: a packet matching part, an action part and a location / priority part. The matching part describes the values which a received network packet should have for a given rule to be applied. The action part states the required operations to perform to the matched network packets, while the location / priority part controls the hierarchy of the rules using tables and priorities. Finally, it is important to note that there exists a special output port for a flow rule, the controller port; when a packet is sent to the controller, the controller queries the SDN applications to decide the actions to perform to the packet; as a result, the controller may install new flow rules, drop or forward the packet to a specific port. In this paper, we focus on the resulting data paths (produced by the rules installed at the forwarding devices); more precisely, we focus on the analysis of such data paths and the potentially unintended additional data paths resulting from a configuration. To better outline the working principles of SDN rules, consider the following rules installed at a given switch:

\begin{center}
\begin{tabular}{c|c|c|c|c}
     \textbf{ID} & \textbf{Priority}  & \textbf{TCP DST PORT} & \textbf{DST IP} & \textbf{Action}\\ \hline
     1 & 5000 &     & 10.0.1.22 & OUT(2)    \\\hline
     2 & 5001 & 22  &           & OUT(3)    \\\hline
     3 & 6000 &     & 10.0.1.23 & CTRLLR    \\
\end{tabular}
\end{center}

To simplify our explanation, and without loss of generality we consider that the rules are installed in the first table of the SDN-enabled switch (table 0). TCP DST PORT is the TCP destination port and DST IP is the destination IP (for further information on basic networking concepts the reader can give a look at \cite{tcpip}). A network packet with the destination IP address 10.0.1.22 and destination TCP port 22 will be forwarded to the output 2 (due to the higher priority of rule 1). Likewise, a network packet with destination IP address 10.0.1.21 and destination TCP port 22 will be forwarded to port 3 (the highest priority rule matching the network packet). Finally, if a network packet going to the destination IP address 10.0.1.23 (and the destination TCP port not equal to 22) arrives, the switch sends this packet to the controller, asking for the action to take with the packet, the controller may reply with a new rule, drop the packet or forward it to a set of ports.

In this paper, the SDN \emph{resource} topology (data plane) or \emph{resource network connectivity topology (RNCT)} is represented as an undirected graph $G=(V,E)$ where $E\subseteq \{\{a,b\}| a\in V\;\&\;b\in V\}$ without multiple edges and loops. The set $V$ of nodes represents network devices such as \emph{hosts} and \emph{switches}; the set $H$ is the set of all hosts while $S$ is the set of all switches, $V=H\cup S$, $H\cap S=\emptyset$. Edges of the graph (the set $E$) represent connections (links) between two nodes in $G$ and each link can transmit packets in both directions. Correspondingly, given an edge between nodes $a,b\in E$, we write $(a, b)$ if a packet is transmitted from $a$ to $b$ and $(b, a)$ when it is transmitted from $b$ to $a$. We reasonably assume that each host is connected exactly with one switch, i.e., $\forall h\in H (\mathbf{deg}(h)=1) \;\&\; \exists s\in S  ((h,s)\in E)$ where $\mathbf{deg}(x)$ is the degree of the node $x$. Without loss of generality we also assume that $G$ is connected; otherwise, each (connected) component can be treated as a separate network. 

In the SDN architecture, the instructions for the data plane for packets’ forwarding are provided by SDN applications through an SDN-controller. These instructions (flow rules) produce so-called data paths, sets of paths which should carry on corresponding packets, i.e., those paths can have appropriate parameters according to which the packets are then forwarded; in other words, each packet belongs to an appropriate \emph{traffic type}. When a forwarding rule is installed on an SDN-enabled switch, a data link from and to other node (-s) adjacent to the switch is created, i.e., a packet accepted from adjacent nodes (hosts or switches) is forwarded to a (corresponding) set of ports that are connected to appropriate ports of other nodes.


A host can generate packets that are forwarded to a single switch connected with this host. A switch can only forward packets; moreover, in this paper, we assume that a switch does not modify the packet header, i.e., the packet’s traffic type and payload are not changed through the network. A switch can forward a packet to several ports, and the set of ports depends on the traffic type as well as on the input port from which it arrives. Every node $a$ of the graph $G$ (a host or a switch) has a set of \emph{ports} which can be input as well as output and each such port corresponds to some edge at the node $a$ and vice versa, each edge at the node $a$ is associated with a corresponding port. Thus, there is one-to-one correspondence between edges at the node $a$ and the set of its ports. Since $G$ has no multiple edges nor node (self) loops there is one-to-one correspondence between the set of ports of $a$ and the set of neighbor nodes of $a$. Therefore, without loss of generality, we can use a neighbor node instead of the port number. 

A path $\pi$ is a sequence of neighboring nodes of $G$,  i.e., a path is a sequence\footnote{As usual, we use `$\cdot$' for denoting the sequence concatenation.} of nodes such that there is an edge between neighboring sequence nodes. A path $\pi=x_1 \cdot \ldots \cdot x_n$ starts at the node $x_1$, is finished at the node $x_n$, has length $n-1$, and passes via an arc $(x_i, x_{i+1})$ for $i \in \{1, \ldots, n-1\}$. The path is \emph{edge simple} if it passes via each arc at most one time: $(x_i, x_{i+1}) = (x_j, x_{j+1}) \implies i = j$. The path is \emph{node simple} if all its nodes are pairwise different, i.e., $x_i = x_j \implies i = j$. A path is \emph{complete} if its head and tail nodes are hosts and there are no hosts as intermediate nodes.

An SDN application configures sets of paths (through the controller) which should transport corresponding packets, i.e., those paths can have appropriate parameters (which define their traffic type) according to which the packets are then forwarded \cite{ofspec}. The flow rules of a switch can be written as a mapping of input ports into subsets of output ports. If the subset of output ports is empty then the switch will `drop' a packet that arrived at a corresponding input port. 

In this paper, we assume that an SDN application configures the switch tables in such a way that each rule determines the set of output ports depending on the traffic type and an input port. As $G$ has no multiple edges it implies that a rule determines the set of neighboring nodes where a packet has to be forwarded. We also assume that all the switches have in their tables only the information sent by the controller, i.e., no default rules or external interfaces are considered. For the sake of simplicity and in fact, without loss of generality for our purpose, we assume that all the rules have the same priority. For packets belonging to the same traffic type, we can consider every rule as a triple $(a, s, b) \in V \times S \times V$ where $a$ and $b$ are neighbors of $s$. This rule says that getting a packet with the corresponding traffic type from neighbor $a$, switch $s$ should send it to the neighbor $b$. If there are several rules which differ only in the neighbor $b$, then switch $s$ performs \emph{cloning}, i.e., the incoming packet is transmitted to several neighbors. The set of rules of all switches is called \emph{configuration} (for the given traffic type). 

\section{Implementing the given set of complete paths}\label{sec:impelementing}
\subsection{Analysis of paths that can be implemented on the data plane}
The set of complete paths that should be implemented on the data plane is based on a user request or predefined configuration (by a given application). Correspondingly, before setting a switch configuration according to a set of paths, it would be useful to verify whether a given set of paths can be eventually implemented. Note that hereafter we assume that the requested set of paths $P$ does not contradict the RNCT $G$. A trivial check that $P$ forms a sub-graph of $G$ can be performed beforehand, if necessary.

When implementing a set of paths $P$, three options are possible. 1) $P$ can be implemented as it is and in this case, the edge simplicity should be verified for the set $P$. 2) $P$ cannot be implemented without implementing unintended paths, i.e., a superset of $P$ is implemented. In this case, the condition of the edge simplicity should be checked for this superset. If the minimal superset of $P$ that can exist on the data plane has cycling paths, then the set $P$ cannot be implemented (packet loops may flood the network) in the given data plane. 3) $P$ cannot be implemented but the minimal superset of $P$ that can be implemented satisfies the edge simplicity property.
We further discuss how given a set $P$ of paths, a corresponding switch configuration is specified and given a switch configuration, which paths are induced by this configuration. 

\paragraph{Complete paths induce switch rules} When implementing rules for a complete path (for the given traffic type) $\alpha \cdot a \cdot b \cdot c \cdot \beta$ where $a,b,c\in V, \alpha,\beta \in V^*$, we need a rule $(a, b, c)$, i.e., a switch $b$ once getting a packet belonging to this traffic type from the neighbor $a$ has to send it to the neighbor $c$. Formally, the set $P$ of paths induces the set $P\!\!\downarrow$ of rules: 

$\forall a\in V, b \in S, c\in V, \alpha \in V^*, \beta \in V^*$

$\alpha \cdot a \cdot b \cdot c \cdot \beta \in P$ implies that there is a rule	$(a, b, c) \in P\!\!\downarrow$.

\paragraph{Switch rules induce paths} The rule $(a, b, c)$ induces a path $a \cdot b \cdot c$ of length $2$. If there is a path $\alpha \cdot x \cdot y$ and there is a rule $(x, y, z)$ then there is a path $\alpha \cdot x \cdot y \cdot z$. Formally, a switch configuration $P\!\!\downarrow$ induces the set of complete paths, written $P\!\!\downarrow\uparrow$: 

$\forall b_j \in V$

$(a_1, b_1, b_2), (b_1, b_2, b_3), \ldots, (b_{n-1}, b_n, a_2) \in P\!\!\downarrow$ where $a_1$ and $a_2$ are the only hosts, 
there is a path $a_1 \cdot b_1 \cdot b_2 \ldots \cdot b_{n-1} \cdot b_n \cdot a_2$ in $P\!\!\downarrow\uparrow$.

By definition, the set $P\!\!\downarrow\uparrow$ has only complete paths. By the definition of $P\!\!\downarrow$ and $P\!\!\downarrow\uparrow$, the following statement holds.

\begin{proposition}\label{stmt:rulesinducepaths}
Given a switch $b$, for each rule $(a, b, c) \in P\!\!\downarrow$ of this switch, there is a path $\alpha \cdot a \cdot b \cdot c \cdot \beta \in P\!\!\downarrow\uparrow$ for some $\alpha$ and $\beta$. 
\end{proposition}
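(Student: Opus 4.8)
The plan is to unfold the two defining maps $P \mapsto P\!\!\downarrow$ and $P\!\!\downarrow \mapsto P\!\!\downarrow\uparrow$ and to exhibit an explicit witness: the very complete path of $P$ that caused the rule $(a,b,c)$ to be created is itself reconstructed inside $P\!\!\downarrow\uparrow$, and it runs through $a \cdot b \cdot c$. So rather than reasoning abstractly about membership, I would chase a single path through both operations.

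First I would read the definition of $P\!\!\downarrow$ backwards. Since $P\!\!\downarrow$ is exactly the set of rules induced by the paths of $P$, the presence of the rule $(a,b,c)$ of switch $b$ forces the existence of $\alpha, \beta \in V^*$ with $\pi = \alpha \cdot a \cdot b \cdot c \cdot \beta \in P$; in particular $a \cdot b \cdot c$ occurs as a consecutive window of $\pi$. Because every path of $P$ is complete, $\pi$ has the shape $a_1 \cdot b_1 \cdot \ldots \cdot b_n \cdot a_2$ with $a_1, a_2 \in H$ its only hosts and $b_1, \ldots, b_n \in S$, and $a \cdot b \cdot c$ equals one of the windows $b_{i-1} \cdot b_i \cdot b_{i+1}$ with $b = b_i$ a switch (allowing the window to begin at $a_1$ or to end at $a_2$).

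Next I would push $\pi$ forward. Sliding a length-three window along $\pi$ and applying the definition of $P\!\!\downarrow$ to each window shows that all the rules $(a_1, b_1, b_2), (b_1, b_2, b_3), \ldots, (b_{n-1}, b_n, a_2)$ belong to $P\!\!\downarrow$. Consecutive windows of $\pi$ overlap in two nodes, so these rules chain together in exactly the way demanded by the premise of the definition of $P\!\!\downarrow\uparrow$, with $a_1, a_2$ as the only hosts. That definition then puts $a_1 \cdot b_1 \cdot \ldots \cdot b_n \cdot a_2 = \pi$ into $P\!\!\downarrow\uparrow$. Since $\pi = \alpha \cdot a \cdot b \cdot c \cdot \beta$, this is precisely the required path and the same $\alpha, \beta$ work.

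I expect the genuine content of the argument to be light and the only care needed to be bookkeeping. One must handle the boundary cases in which $a = a_1$ or $c = a_2$ is a host, so that $\alpha$ or $\beta$ is the empty word and the matching rule sits at the start or the end of the chain; this is harmless because $\alpha, \beta$ range over $V^*$. It is also worth stating explicitly that I do not claim $P\!\!\downarrow\uparrow$ reconstructs only $\pi$ --- rules coming from other paths of $P$ may recombine into additional paths --- but the proposition asks merely for one witness, and $\pi$, being rebuildable from its own induced rules alone, is that witness.
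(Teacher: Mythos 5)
Your proof is correct and matches the paper's argument, which simply asserts the proposition ``by the definition of $P\!\!\downarrow$ and $P\!\!\downarrow\uparrow$''; you have just unfolded those two definitions explicitly, tracing the originating complete path $\pi\in P$ back through its own induced rule chain into $P\!\!\downarrow\uparrow$. The bookkeeping about boundary windows and the remark that $\pi$ itself is the witness are exactly the details the paper leaves implicit.
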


We now discuss the features of the set $P\!\!\downarrow\uparrow$. If there are two paths $\alpha \cdot x \cdot y \cdot \beta$ and $\alpha' \cdot x \cdot y \cdot \beta'$ in the set $P\!\!\downarrow\uparrow$ of complete paths, then according to the above rules, there are paths $\alpha \cdot x \cdot y \cdot \beta'$ and $\alpha' \cdot x \cdot y \cdot \beta$. Consider the case when $\alpha$ and $\alpha'$ are not empty, i.e., $x$ is a switch. If $\beta$ and $\beta'$ are not empty then according to the prefix of the path, switch $x$, once getting a packet passed the path $\alpha$ or the path $\alpha'$, sends the packet to switch $y$. According to the postfix, switch $y$, once getting a packet from switch $x$, sends it to the starting point of the paths $\beta$ and $\beta'$, and the packet passes the paths $\beta$ and $\beta'$. If $\beta$ and $\beta'$ are empty, then $y$ is a host and the packet passes both paths $\alpha \cdot x \cdot y$ and $\alpha' \cdot x \cdot y$ Therefore, the following statement holds.

\begin{proposition}\label{stmt:pathsinducepaths}
Given a switch configuration $P\!\!\downarrow$, $P\!\!\downarrow$ induces the set of complete paths $P\!\!\downarrow\uparrow$ with the following features:

$\forall \alpha, \alpha', \beta, \beta' \in V^*$

$\alpha \cdot x \cdot y \cdot \beta \in P\!\!\downarrow\uparrow \;\&\; \alpha' \cdot x \cdot y \cdot \beta' \in P\!\!\downarrow\uparrow$ $\implies$ $\alpha \cdot x \cdot y \cdot \beta' \in P\!\!\downarrow\uparrow$.
\end{proposition}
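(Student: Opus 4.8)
The plan is to prove the claim directly from the defining property of \(P\!\!\downarrow\uparrow\), after first reducing membership to a purely local condition on consecutive node triples. Write \(\pi_1 = \alpha \cdot x \cdot y \cdot \beta\) and \(\pi_2 = \alpha' \cdot x \cdot y \cdot \beta'\) for the two hypothesis paths and \(\pi_3 = \alpha \cdot x \cdot y \cdot \beta'\) for the target path. I would record the following \emph{local characterization}, immediate from the definitions of \(P\!\!\downarrow\) and \(P\!\!\downarrow\uparrow\): a complete path \(z_0 \cdot z_1 \cdot \ldots \cdot z_m\) lies in \(P\!\!\downarrow\uparrow\) if and only if each of its consecutive triples \((z_{i-1}, z_i, z_{i+1})\), for \(i \in \{1, \ldots, m-1\}\), is a rule of \(P\!\!\downarrow\). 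Hence to conclude \(\pi_3 \in P\!\!\downarrow\uparrow\) it suffices to check that (i) \(\pi_3\) is a complete path and (ii) every consecutive triple of \(\pi_3\) is a rule of \(P\!\!\downarrow\).

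For step (ii), the key observation is that the triples of \(\pi_3\) split at the arc \((x, y)\). Every triple centered at \(x\) or at an earlier node also occurs in \(\pi_1\), since \(\pi_1\) and \(\pi_3\) share the prefix \(\alpha \cdot x \cdot y\); in particular the junction triple centered at \(x\) is contributed by \(\pi_1\). Symmetrically, every triple centered at \(y\) or at a later node also occurs in \(\pi_2\), since \(\pi_2\) and \(\pi_3\) share the suffix \(x \cdot y \cdot \beta'\); in particular the junction triple centered at \(y\) is contributed by \(\pi_2\). Because \(x\) and \(y\) are adjacent, these two families exhaust all triples of \(\pi_3\), so the triple set of \(\pi_3\) is contained in the union of the triple sets of \(\pi_1\) and \(\pi_2\). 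Since \(\pi_1, \pi_2 \in P\!\!\downarrow\uparrow\), the local characterization gives that all their triples, and therefore all triples of \(\pi_3\), are rules of \(P\!\!\downarrow\).

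For step (i), I would argue that the host/switch role of each node of \(\pi_3\) is inherited consistently from the hypotheses: the nodes of \(\alpha\), together with \(x\) whenever \(\alpha\) is nonempty, play the same role as in \(\pi_1\), while the nodes of \(\beta'\), together with \(y\) whenever \(\beta'\) is nonempty, play the same role as in \(\pi_2\). Since \(\pi_1\) and \(\pi_2\) are complete, this makes the two endpoints of \(\pi_3\) hosts and every interior node a switch, so \(\pi_3\) is complete.

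The swap argument of step (ii) is essentially immediate once the local characterization is available; I expect the main obstacle to be the boundary bookkeeping of step (i) when one or more of \(\alpha, \alpha', \beta, \beta'\) is empty. For instance, if \(\alpha\) is empty then \(x\) is the starting node of \(\pi_1\) and hence a host, so \(x\) cannot occur as an interior switch; this forces \(\alpha'\) to be empty as well, the hypotheses being consistent only because a node's host/switch status is fixed by \(G\). Carefully handling these empty versus nonempty cases (and the analogous ones for \(y\) through \(\beta\) and \(\beta'\)) is where the argument requires attention, and it recovers exactly the informal case split sketched before the statement.
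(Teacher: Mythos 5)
Your proof is correct and follows essentially the same route as the paper: the paper's argument is the informal paragraph preceding the proposition, which glues the rules induced by the prefix $\alpha \cdot x \cdot y$ of the first path to those induced by the suffix $x \cdot y \cdot \beta'$ of the second, splitting exactly at the shared arc $(x,y)$ and treating the empty/nonempty cases for $\alpha, \alpha', \beta, \beta'$. Your version merely makes this precise via the triple-by-triple characterization of membership in $P\!\!\downarrow\uparrow$ and handles the endpoint bookkeeping more carefully than the paper does.
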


According to Proposition~\ref{stmt:pathsinducepaths}, the set of data paths on the data plane induced by the given set $P$ is exactly $P\!\!\downarrow\uparrow$, and in fact, it is the actual set of paths that gets implemented when requesting to implement the set $P$. 

The set $P$ of complete paths is \emph{closed with respect to a given arc} $(x, y)$ if for each two paths  $\alpha \cdot x \cdot y \cdot \beta$ and $\alpha' \cdot x \cdot y \cdot \beta'$ of the set $P$ which have a common arc $(x, y)$, paths $\alpha \cdot x \cdot y \cdot \beta'$ and  $\alpha' \cdot x \cdot y \cdot \beta$ are also in $P$. The set $P$ of paths is \emph{arc closed} if $P$ is closed w.r.t. each arc over the set $E$. Given a set $P$ of complete paths, the arc closure of $P$ is the smallest arc closed set of complete paths that contains $P$. 

According to the definition of an arc closed set and Proposition~\ref{stmt:pathsinducepaths}, the following statement can be established.

\begin{proposition}\label{stmt:arcclosure}
Given a set $P$ of complete paths, the set $P\!\!\downarrow\uparrow$ is the arc closure of $P$.
\end{proposition}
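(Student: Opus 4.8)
The goal is to show that $P\!\!\downarrow\uparrow$ satisfies the three conditions defining the arc closure of $P$, i.e., that it is the smallest arc closed set of complete paths containing $P$. Concretely, I would establish: (i) $P \subseteq P\!\!\downarrow\uparrow$; (ii) $P\!\!\downarrow\uparrow$ is arc closed; and (iii) $P\!\!\downarrow\uparrow \subseteq Q$ for every arc closed set $Q$ of complete paths with $P \subseteq Q$. Conditions (i) and (ii) are immediate consequences of the constructions and of Proposition~\ref{stmt:pathsinducepaths}, whereas (iii), the minimality, carries the actual argument.

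For (i), I would take an arbitrary complete path $\pi = a_1 \cdot b_1 \cdot \ldots \cdot b_n \cdot a_2 \in P$ (with $a_1, a_2$ its only hosts), read off its consecutive triples $(a_1, b_1, b_2), (b_1, b_2, b_3), \ldots, (b_{n-1}, b_n, a_2)$, which all belong to $P\!\!\downarrow$ by definition, and feed exactly these rules into the definition of $P\!\!\downarrow\uparrow$; they reassemble $\pi$, so $\pi \in P\!\!\downarrow\uparrow$. For (ii), being arc closed with respect to an arc $(x,y)$ demands, for any two members $\alpha \cdot x \cdot y \cdot \beta$ and $\alpha' \cdot x \cdot y \cdot \beta'$ of the set, that both $\alpha \cdot x \cdot y \cdot \beta'$ and $\alpha' \cdot x \cdot y \cdot \beta$ lie in the set. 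Proposition~\ref{stmt:pathsinducepaths} yields the first directly, and a second application with the two paths exchanged yields the second; since $(x,y)$ is arbitrary, $P\!\!\downarrow\uparrow$ is arc closed.

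The core is (iii). Fix an arc closed $Q \supseteq P$ and a path $\pi = a_1 \cdot b_1 \cdot \ldots \cdot b_n \cdot a_2 \in P\!\!\downarrow\uparrow$. By the definition of $P\!\!\downarrow\uparrow$, $\pi$ is assembled from rules $(a_1, b_1, b_2), \ldots, (b_{n-1}, b_n, a_2) \in P\!\!\downarrow$, and by the definition of $P\!\!\downarrow$ each such rule is witnessed by a complete path in $P \subseteq Q$. The plan is to glue these witnesses together one arc at a time. I would prove by induction that $Q$ contains a complete path $q_m$ agreeing with $\pi$ on its first $m+2$ nodes: the base case $q_1$ is the witness of $(a_1, b_1, b_2)$, which must start at the host $a_1$ since a complete path has hosts only at its ends; for the step, $q_m$ and the witness of the next rule $(b_m, b_{m+1}, b_{m+2})$ share the arc $(b_m, b_{m+1})$, so decomposing both at this arc and applying arc-closedness of $Q$ produces a path $q_{m+1} \in Q$ extending the agreement with $\pi$ by one node. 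After incorporating all $n$ rules, $q_n = \pi$, whence $\pi \in Q$ and $P\!\!\downarrow\uparrow \subseteq Q$.

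I expect the main obstacle to lie in the bookkeeping of this gluing induction: one must check that every intermediate path $q_m$ is genuinely complete --- which holds because gluing two complete paths at an interior switch-to-switch arc preserves both host endpoints and introduces no intermediate host --- and that the decomposition of each witness at the shared arc can always be chosen to match the corresponding arc of $\pi$, even when a node recurs along the path. The latter is legitimate because the arc-closure property quantifies over all decompositions of a path at a given arc, so the specific decomposition aligning the witness with $\pi$ is always available.
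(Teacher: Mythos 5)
Your proof is correct. The paper itself offers no argument for Proposition~\ref{stmt:arcclosure} beyond the remark that it follows ``according to the definition of an arc closed set and Proposition~\ref{stmt:pathsinducepaths},'' so there is no detailed proof to diverge from; your three-part decomposition --- containment $P \subseteq P\!\!\downarrow\uparrow$, arc-closedness of $P\!\!\downarrow\uparrow$ via a double application of Proposition~\ref{stmt:pathsinducepaths}, and minimality --- is the natural way to flesh out exactly what the paper gestures at. The genuine added value is part (iii): the gluing induction that splices the $P$-witnesses of the rules $(b_{m},b_{m+1},b_{m+2})$ into any arc closed superset $Q$ is the step the paper leaves entirely implicit, and you correctly handle the two points where it could go wrong --- that each intermediate $q_m$ remains a complete path (hosts only at the ends, so the endpoints force the witnesses to align at the host $a_1$ and to terminate at $a_2$), and that arc-closedness quantifies over all decompositions at a given arc, so the decomposition matching position $m$ of $\pi$ is available even when the arc recurs. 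No gaps.
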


\begin{corollary}\label{stmt:inducedpathsconcide}
The set $P\!\!\downarrow\uparrow$ coincides with $P$ if and only if $P$ is arc closed. 
\end{corollary}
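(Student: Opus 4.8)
The plan is to reduce everything to Proposition~\ref{stmt:arcclosure}, which identifies $P\!\!\downarrow\uparrow$ with the arc closure of $P$, together with the two defining properties of that closure quoted just before the proposition: the arc closure is itself an arc closed set, and it is the \emph{smallest} arc closed set of complete paths containing $P$. Once these are invoked, both directions of the equivalence follow immediately, so the argument is a matter of unwinding definitions rather than doing any combinatorial work on paths.

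For the forward direction I would assume $P\!\!\downarrow\uparrow = P$. By Proposition~\ref{stmt:arcclosure} the set $P\!\!\downarrow\uparrow$ is the arc closure of $P$, and any arc closure is by definition arc closed; hence $P$, being equal to it, is arc closed.

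For the converse I would assume that $P$ is arc closed and establish the two inclusions separately. The inclusion $P \subseteq P\!\!\downarrow\uparrow$ always holds, since by Proposition~\ref{stmt:arcclosure} the set $P\!\!\downarrow\uparrow$ is the arc closure of $P$ and every arc closure contains the set it closes. For the reverse inclusion I would appeal to minimality: $P$ is an arc closed set containing itself, and the arc closure is the smallest arc closed set containing $P$, so $P\!\!\downarrow\uparrow \subseteq P$. Combining the two inclusions gives $P\!\!\downarrow\uparrow = P$.

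The only point requiring any care — and the closest thing to an obstacle — is ensuring that the two structural facts about the closure, namely arc-closedness and minimality, are legitimately available. Both are part of the definition of arc closure stated immediately before Proposition~\ref{stmt:arcclosure}, so no independent verification of the closure properties of $P\!\!\downarrow\uparrow$ is needed beyond what that proposition already supplies; the corollary is therefore a direct consequence and its proof should be only a few lines.
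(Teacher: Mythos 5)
Your proof is correct and follows exactly the route the paper intends: the corollary is stated as an immediate consequence of Proposition~\ref{stmt:arcclosure} together with the definition of the arc closure (an arc closed set containing $P$, minimal among such sets), and your two-direction unwinding of that definition is precisely the implicit argument.
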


\begin{corollary}\label{stmt:simplepathsclosure}
If $P$ has only edge simple paths and is arc closed then $P\!\!\downarrow\uparrow$ has only edge simple paths.
\end{corollary}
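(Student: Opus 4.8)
The plan is to reduce the claim to the identification of $P\!\!\downarrow\uparrow$ with the arc closure of $P$ that has already been established. First I would invoke Corollary~\ref{stmt:inducedpathsconcide}: since $P$ is assumed to be arc closed, that corollary yields immediately the set equality $P\!\!\downarrow\uparrow = P$ of complete paths. Equivalently, one may appeal to Proposition~\ref{stmt:arcclosure}, which states that $P\!\!\downarrow\uparrow$ is the arc closure of $P$, together with the elementary observation that the arc closure of an already arc closed set is the set itself (the smallest arc closed set containing $P$ is then $P$). Either route delivers $P\!\!\downarrow\uparrow = P$.

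Once the equality $P\!\!\downarrow\uparrow = P$ is in hand, the conclusion is immediate. By hypothesis every path of $P$ is edge simple, and since $P\!\!\downarrow\uparrow$ and $P$ contain exactly the same complete paths, every path of $P\!\!\downarrow\uparrow$ is edge simple as well. In particular no path can appear in $P\!\!\downarrow\uparrow$ without already appearing in $P$, so the induction process introduces no new, possibly non-edge-simple, path that would need to be ruled out separately.

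I do not expect any genuine obstacle here, as the substance of the statement has effectively been discharged by Corollary~\ref{stmt:inducedpathsconcide}; the only point requiring a moment of care is to confirm that the hypothesis of that corollary, namely arc-closedness of $P$, is precisely what is assumed, so that the set equality transfers verbatim. The edge-simplicity hypothesis then plays no further role beyond being carried across the equality. If a self-contained argument avoiding the corollary were preferred, I would instead suppose for contradiction that some path of $P\!\!\downarrow\uparrow$ traverses an arc $(x,y)$ twice and use the arc-closure construction to show that such a path must already have been forced into $P$, contradicting the edge-simplicity assumption on $P$; but this merely re-derives the same equality and is unnecessary given the preceding results.
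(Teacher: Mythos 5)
Your proof is correct and follows essentially the route the paper intends: the corollary is an immediate consequence of Corollary~\ref{stmt:inducedpathsconcide}, which under the arc-closedness hypothesis gives $P\!\!\downarrow\uparrow = P$, so the edge-simplicity of all paths transfers directly. The paper gives no separate argument for this corollary, and none is needed beyond what you wrote.
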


According to Corollary~\ref{stmt:inducedpathsconcide}, the set $P$ can be implemented on the data plane (up to the equality relation) if and only if $P$ is arc closed, i.e., Corollary~\ref{stmt:inducedpathsconcide} establishes necessary and sufficient conditions for the precise implementation of set $P$ on the data plane (without additional `undesired' paths). 

If $P$ is not arc closed then $P$ cannot be implemented on the data plane (up to the equality relation). Moreover, sometimes $P$ cannot be implemented on the data plane at all as its arc closure has some cycling paths. Figure~\ref{fig:loops} shows an example when the set $P$ has two edge simple paths $\alpha$ and $\beta$ from initial host $h_0$ to the final host $h_1$ (left of the figure), the set of rules induced by this set is shown at the bottom and an induced path $\gamma$ of the set $P\!\!\downarrow\uparrow$ is illustrated at the right. The path is not edge simple, and this example illustrates that cycles can occur even when paths of the set $P$ are simple.

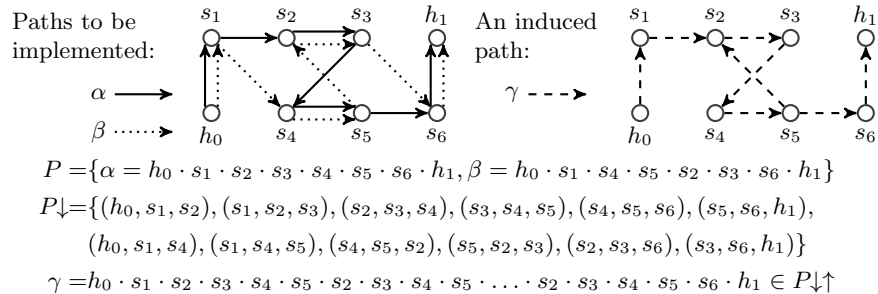
\begin{figure*}[!htb]
    \centering
    \begin{tikzpicture}[node distance=1cm,>=stealth',bend angle=45,auto]
            \tikzstyle{node}=[circle,thick,draw=black!75, inner sep=0.075cm]
            \tikzstyle{normalArrow}=[thick,->]
            \tikzstyle{dashedArrow}=[thick,->,dashed]
            \tikzstyle{dottedArrow}=[thick,->,dotted]

            \node[node]                 (s1a)       {};
            \node[]                     (s1atext)   at ([yshift=0.3cm]s1a.center)  {$s_1$};
            \node[node, right of=s1a]   (s2a)       {};
            \node[]                     (s2atext)   at ([yshift=0.3cm]s2a.center)  {$s_2$};
            \node[node, right of=s2a]   (s3a)       {};
            \node[]                     (s3atext)   at ([yshift=0.3cm]s3a.center)  {$s_3$};
            \node[node, right of=s3a]   (h1a)       {};
            \node[]                     (h1atext)   at ([yshift=0.3cm]h1a.center)  {$h_1$};
            \node[node, below of=s1a]   (h0a)       {};
            \node[]                     (h0atext)   at ([yshift=-0.3cm]h0a.center)  {$h_0$};
            \node[node, right of=h0a]   (s4a)       {};
            \node[]                     (s4atext)   at ([yshift=-0.3cm]s4a.center)  {$s_4$};
            \node[node, right of=s4a]   (s5a)       {};
            \node[]                     (s5atext)   at ([yshift=-0.3cm]s5a.center)  {$s_5$};
            \node[node, right of=s5a]   (s6a)       {};
            \node[]                     (s6atext)   at ([yshift=-0.3cm]s6a.center)  {$s_6$};
            
             \path  (h0a.north east)    edge[dottedArrow]   node[] {}   (s1a.south east)
                    (s1a)               edge[dottedArrow]   node[] {}   (s4a)
                    (s4a.south east)    edge[dottedArrow]   node[] {}   (s5a.south west)
                    (s5a)               edge[dottedArrow]   node[] {}   (s2a)
                    (s2a.south east)    edge[dottedArrow]   node[] {}   (s3a.south west)
                    (s3a)               edge[dottedArrow]   node[] {}   (s6a)
                    (s6a.north east)    edge[dottedArrow]   node[] {}   (h1a.south east)
             ;
             
             \path  (h0a.north west)    edge[normalArrow]    node[] {}   (s1a.south west)
                    (s1a)               edge[normalArrow]   node[] {}   (s2a)
                    (s2a.north east)    edge[normalArrow]   node[] {}   (s3a.north west)
                    (s3a)               edge[normalArrow]   node[] {}   (s4a)
                    (s4a.north east)    edge[normalArrow]   node[] {}   (s5a.north west)
                    (s5a)               edge[normalArrow]   node[] {}   (s6a)
                    (s6a.north west)    edge[normalArrow]   node[] {}   (h1a.south west)
             ;
             
             \node[text width=2.3cm]    (legend1)   at ([xshift=-1.5cm]s1a)         {Paths to be implemented:};
             \node[below of=legend1]    (alphatext) at ([yshift=0.25cm]legend1)     {$\alpha$};
             \node[]                    (alphas)    at ([xshift=0.1cm]alphatext)    {};
             \node[right of=alphas]     (alphae)                                    {};
             \node[below of=alphatext]  (betatext)  at ([yshift=0.5cm]alphatext)    {$\beta$};
             \node[]                    (betas)     at ([xshift=0.1cm]betatext)     {};
             \node[right of=betas]      (betae)                                     {};
             
             \path (alphas)             edge[normalArrow]   node[] {}   (alphae);
             
             \path (betas)             edge[dottedArrow]   node[] {}   (betae);
             
             \node[text width=2cm]      (legend2)   at ([xshift=1.5cm]h1a)         {An induced path:};
             \node[below of=legend2]    (gammatext) at ([yshift=0.25cm,xshift=-0.5cm]legend2)     {$\gamma$};
             \node[]                    (gammas)    at ([xshift=0.1cm]gammatext)    {};
             \node[right of=gammas]     (gammae)                                    {};
             
             \path (gammas)             edge[dashedArrow]   node[] {}   (gammae);
             
            \node[node]                 (s1b)       at ([xshift=1.2cm]legend2)       {};
            \node[]                     (s1btext)   at ([yshift=0.3cm]s1b.center)  {$s_1$};
            \node[node, right of=s1b]   (s2b)       {};
            \node[]                     (s2btext)   at ([yshift=0.3cm]s2b.center)  {$s_2$};
            \node[node, right of=s2b]   (s3b)       {};
            \node[]                     (s3btext)   at ([yshift=0.3cm]s3b.center)  {$s_3$};
            \node[node, right of=s3b]   (h1b)       {};
            \node[]                     (h1btext)   at ([yshift=0.3cm]h1b.center)  {$h_1$};
            \node[node, below of=s1b]   (h0b)       {};
            \node[]                     (h0btext)   at ([yshift=-0.3cm]h0b.center)  {$h_0$};
            \node[node, right of=h0b]   (s4b)       {};
            \node[]                     (s4btext)   at ([yshift=-0.3cm]s4b.center)  {$s_4$};
            \node[node, right of=s4b]   (s5b)       {};
            \node[]                     (s5btext)   at ([yshift=-0.3cm]s5b.center)  {$s_5$};
            \node[node, right of=s5b]   (s6b)       {};
            \node[]                     (s6btext)   at ([yshift=-0.3cm]s6b.center)  {$s_6$};
            
             \path  (h0b)   edge[dashedArrow]   node[] {}   (s1b)
                    (s1b)   edge[dashedArrow]   node[] {}   (s2b)
                    (s2b)   edge[dashedArrow]   node[] {}   (s3b)
                    (s3b)   edge[dashedArrow]   node[] {}   (s4b)
                    (s4b)   edge[dashedArrow]   node[] {}   (s5b)
                    (s5b)   edge[dashedArrow]   node[] {}   (s2b)
                    (s5b)   edge[dashedArrow]   node[] {}   (s6b)
                    (s6b)   edge[dashedArrow]   node[] {}   (h1b)
             ;
             
             \node[text width=12cm]    (legend3)  at ([yshift=-1cm]s6atext)
             {
                \begin{align*}
                P=&\{\alpha=h_0 \cdot s_1 \cdot s_2 \cdot s_3 \cdot s_4 \cdot s_5 \cdot s_6 \cdot h_1, \beta= h_0 \cdot s_1 \cdot s_4 \cdot s_5 \cdot s_2 \cdot s_3 \cdot s_6 \cdot h_1\}\\
                P\!\!\downarrow=&\{(h_0,s_1,s_2),(s_1,s_2,s_3),(s_2,s_3,s_4),(s_3,s_4,s_5),(s_4,s_5,s_6),(s_5,s_6,h_1),\\
                                &(h_0,s_1,s_4),(s_1, s_4, s_5),(s_4,s_5,s_2),(s_5,s_2,s_3),(s_2,s_3,s_6),(s_3,s_6,h_1)\}\\
                \gamma=&h_0 \cdot s_1 \cdot s_2 \cdot s_3 \cdot s_4 \cdot s_5 \cdot s_2 \cdot s_3 \cdot s_4 \cdot s_5 \cdot \ldots \cdot s_2 \cdot s_3 \cdot s_4 \cdot s_5 \cdot s_6 \cdot h_1 \in P\!\!\downarrow\uparrow
                \end{align*}
             };
    \end{tikzpicture}
    \caption{Induced (cyclic) paths’ occurrence}
    \label{fig:loops}
\end{figure*}

Similar to $P$, all the paths of the set $P\!\!\downarrow\uparrow$ are complete paths. However, if  $P\!\!\downarrow\uparrow$ is a proper superset of $P$ then we have to check whether all the paths of the set $P\!\!\downarrow\uparrow$ are edge simple. If it is the case then the set $P$ can be implemented on the data plane up to the set $P\!\!\downarrow\uparrow$ (i.e., with additional unspecified paths from $P\!\!\downarrow\uparrow \setminus P$). If it is not the case then the set $P$ should be modified and this issue is discussed in Section~\ref{sec:debugging}. 

From the practical point of view, perhaps the most interesting application is when some set $P\!\!\downarrow\uparrow$ of paths is already implemented on the data plane and a new request arrives; either a request $A$ to add new paths ($P\cup A$) or a request $R$  to remove paths ($P\setminus R$) to / from the original set. In this case, the same check should be performed on $((P\cup A)\setminus R)\!\!\downarrow\uparrow$ before implementing / removing paths, guaranteeing the implementability of the augmented set of paths. Algorithm~\ref{algo:checking}  summarizes the necessary verification steps (Section~\ref{sec:checking}) and returns the corresponding verdict about the implementability of a given set of paths.

\subsection{Practical / Experimental motivation}
It is worth noting that though the approach presented above is theoretical, the implications for real SDN frameworks are substantial. Indeed, if two loopless paths can induce (infinitely) more paths, the performance and security of such frameworks can be highly compromised. In order to verify if our (fundamental) findings can occur in real SDN framework implementations, an experimental evaluation was performed. 

Experiments were carried in a virtual machine running GNU/Linux CentOS 7.6 with 8 vCPUs and 16GB of RAM. The Onos \cite{berde2014onos} SDN controller (version 4.2.8) was installed via a Docker \cite{docker} container. To emulate the SDN data-plane, the Containernet \cite{containernet} was also installed through a Docker container.


The paths shown in Figure~\ref{fig:loops} were configured independently, successful communication from $h_0$ to $h_1$ was discovered using the data path discovery tool presented in \cite{JJD19} and the discovered paths are shown in Figure~\ref{fig:independent_paths}. As can be seen, there is no problem while configuring both paths independently. When both paths were configured simultaneously, the loop was effectively produced. A single packet sent from $h_0$ to $h_1$ produced infinitely many of them. In Figure~\ref{fig:pcap}, we show the packet dump (using the well-known utility \lstinline[language=bash]{tcpdump}) as seen by $h_1$. Note that, the packet sent is an ICMP echo request (using the \lstinline[language=bash]{ping} utility), and the sequence ID is always 1, as the single packet gets copied infinitely many times. When continuously sending the packets the network rapidly degraded until the whole infrastructure became unusable. 

\begin{figure}
    \centering
    \begin{tabular}{c}
        \includegraphics[angle=90, width=\textwidth]{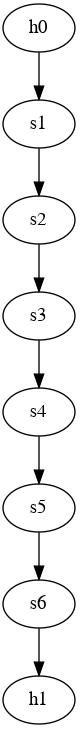} \\
        $\alpha$\\
        \includegraphics[angle=90, width=\textwidth]{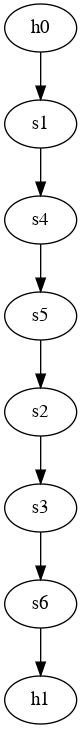} \\
        $\beta$\\
    \end{tabular}
    \caption{Discovered data-paths ($\alpha$ and $\beta$)}
    \label{fig:independent_paths}
\end{figure}

\begin{figure}
    \centering
    \includegraphics[width=\textwidth]{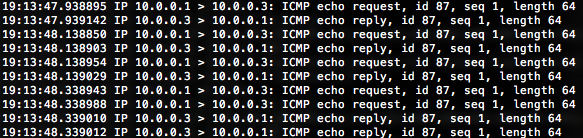}
    \caption{Packet capture showing an infinite loop in the experimental infrastructure}
    \label{fig:pcap}
\end{figure}

These experiments confirm the importance of our findings. Indeed, it is important to provide SDN frameworks with verification tools before rules are pushed to the switches. One of the procedures for such verification is given in Algorithm~\ref{algo:checking}. Note that,  Corollary~\ref{stmt:inducedpathsconcide} provides a criterion for effective verification of the set of paths $P$. However, for that matter the $P\!\!\downarrow\uparrow$ (the arc closure of $P$) needs to be derived as well, and this issue is discussed in the next section. 

\section{Checking the arc closure}\label{sec:checking}
In this section, we propose an algorithm for checking if a given set of paths $P$ induces unintended paths, i.e., a superset of $P$ is implemented (when $P$ is intended); likewise, we discuss how to detect potential cycles induced by the implementation of $P$.

Algorithm~\ref{algo:checking} shows the verification steps necessary to check the arc closure of a given set of paths. Given the set $P$ of complete paths in the graph $G$, we construct a directed graph $D(P)$. Vertices of $D(P)$ are arcs of paths from $P$ and there is an arc $((a, b), (b, c))$ in $D(P)$ if and only if $P$ has a path $\alpha \cdot a \cdot b \cdot c \cdot \beta$ where $\alpha$ and $\beta$ are not empty sequences. There are two special nodes in $D(P)$, the initial node $source$, and the final node $sink$. Since $P$ contains only complete paths, in the graph $D(P)$, there is an edge from the $source$ vertex to a head pair $(a, b)$ of each path where $a$ is a host, while there is an edge to the $sink$ node from the tail pair $(c, d)$ of each path, where $d$ is a host. The path $source \cdot (h_1, s_2) \cdot (s_2, s_3) \cdot \ldots \cdot (s_{m - 1}, h_m) \cdot sink$ in the graph $D(P)$ starting at the $source$ vertex and ending at the $sink$ vertex corresponds to the complete path $h_1 \cdot s_2 \cdot s_3 \cdot \ldots \cdot s_{m - 1} \cdot h_m$ in the graph $G$ where $h_1$ and $h_m$ are hosts. The set of such complete paths in the graph $G$, corresponding to the paths in the graph $D(P)$ from $source$ to $sink$, is precisely the closure of the set $P$. If the number of such paths in $D(P)$ is greater than the cardinality of the set $P$, this means that the closure expands the set $P$. The detailed verification procedure is shown in Algorithm~\ref{algo:checking} and Proposition~\ref{proposition_checking} (valid by construction) establishes the correctness of the algorithm. Note that the algorithm always terminates due to the finite calculations in nested loops, independently if $P\!\!\downarrow\uparrow$ contains a path with a loop or not.

\begin{algorithm}[!htb]
\SetKwInOut{Input}{Input}\SetKwInOut{Output}{Output}\SetKw{KwBy}{by}
    \Input{A set $P$ of edge simple complete paths}
    \Output{A verdict whether the set $P$ is arc closed}
    
    Derive a subset $Q = \{q_1, \dots, q_k\}$ of $P$ that contains all the paths of length greater than two; we denote as $k_j$ the length of a path $q_j$, $j \in \{1, \dots, k\}$\;
    
    Derive a graph $D(P) = <D, E>$ for the set $Q$ where the vertices of $D(P)$ are pairs of vertices of the paths in $Q$\;
    
    $D = \{source, sink\}$; $E = \emptyset$\;
    $j = 0$\;

    \While {$j < k$}
    {
        $j++$;
        $D = D \cup \{(q_j(1), q_j(2)), (q_j(k_j), q_j(k_j+1))\}$\; 
        $E = E \cup \{(source, (q_j(1), q_j(2)), (q_j(k_j), q_j(k_j+1)), sink)\}$\; 
        $m = 2$\;

        \While {$m < k_j + 1$}
        {
            $D = D \cup \{(q_j(m), q_j(m + 1))\}$\; 
            $E = E \cup \{((q_j(m-1), q_j(m)), (q_j(m), q_j(m + 1)))\}$\;
            $m++$\;
        }
    }
    \If{the number of paths in $D(P)$ from $source$ to $sink$ is greater than $k$} 
    {
        \Return{$False$}\;
    }
    
    \Return{$True$}\;
    
\caption{Verifying if the set of paths $P$ is arc closed} \label{algo:checking}
\end{algorithm}

\begin{proposition} \label{proposition_checking}
Algorithm~\ref{algo:checking} returns the verdict $True$ if and only if $P$ is arc closed.
\end{proposition}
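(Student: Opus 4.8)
The plan is to reduce the statement to the closure characterisation already proved and then to a counting argument on $source$-to-$sink$ paths in $D(P)$. By Proposition~\ref{stmt:arcclosure} the set $P\!\!\downarrow\uparrow$ is the arc closure of $P$, and by Corollary~\ref{stmt:inducedpathsconcide} the set $P$ is arc closed exactly when $P\!\!\downarrow\uparrow = P$. Since $P \subseteq P\!\!\downarrow\uparrow$ always holds, being arc closed is equivalent to the numerical identity $|P\!\!\downarrow\uparrow| = |P|$, so it suffices to relate the quantity computed by the algorithm to this cardinality comparison.

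First I would pin down what the $source$-$sink$ paths of $D(P)$ enumerate. By construction an internal edge $((a,b),(b,c))$ of $D(P)$ is present precisely when some path of $Q$ contains the consecutive triple $a \cdot b \cdot c$, i.e. exactly when $(a,b,c)$ is a rule of $Q\!\!\downarrow$; the $source$-edges mark the head pairs $(a,b)$ with $a$ a host and the $sink$-edges mark the tail pairs $(c,d)$ with $d$ a host. A walk $source \cdot (a_1,a_2) \cdot (a_2,a_3) \cdots (a_{n-1},a_n) \cdot sink$ thus chains a sequence of rules of $Q\!\!\downarrow$ whose endpoints $a_1,a_n$ are hosts, and reading off the overlapping node sequence $a_1 \cdot a_2 \cdots a_n$ gives, by the very definition of $Q\!\!\downarrow\uparrow$, a complete path of $Q\!\!\downarrow\uparrow$; conversely every path of $Q\!\!\downarrow\uparrow$ is obtained by such a chaining and yields a unique walk. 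Because consecutive vertices overlap in their shared middle node, the node sequence is recovered unambiguously, so the map is a bijection. Hence the number of $source$-$sink$ paths of $D(P)$ equals $|Q\!\!\downarrow\uparrow|$, which by Proposition~\ref{stmt:arcclosure} applied to $Q$ is the cardinality of the arc closure of $Q$.

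Next I would bridge the count $k = |Q|$ used in the test with arc closure of the full set $P$. The auxiliary fact is that the complete paths of length two are \emph{inert} for the closure: such a path $h \cdot s \cdot h'$ has only the boundary arcs $(h,s)$ and $(s,h')$, and since every host has degree one, any path sharing one of these arcs has an empty prefix (resp. empty suffix) at it, so the closure swap reproduces the original paths and creates nothing new; the same host/switch dichotomy shows that no length-two path can ever be produced by a swap of longer paths. Consequently the arc closure of $P$ is the disjoint union of the arc closure of $Q$ and the set of length-two paths of $P$, so $P$ is arc closed iff $Q$ is arc closed iff $|Q\!\!\downarrow\uparrow| = |Q| = k$. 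Combining this with the first step, the number of $source$-$sink$ paths is always at least $k$ and equals $k$ precisely when $P$ is arc closed; as the algorithm returns $False$ on a strict excess and $True$ otherwise, the equivalence follows.

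The two directions of the bijection and the well-formedness of $D(P)$ are routine consequences of the construction, which is why the authors call the statement valid by construction. I expect the only genuinely delicate point to be the bridging step: justifying that restricting $D(P)$ to $Q$ and comparing against $k = |Q|$ rather than $|P|$ is sound. This rests entirely on the inertness of length-two paths and on the degree-one assumption for hosts, so I would isolate and prove that inertness as a short claim before invoking it.
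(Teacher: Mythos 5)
Your proof is correct and follows the same route the paper sketches (the paper itself only asserts the proposition is ``valid by construction''): identify the $source$-to-$sink$ paths of $D(P)$ with the arc closure $P\!\!\downarrow\uparrow$ via Proposition~\ref{stmt:arcclosure} and Corollary~\ref{stmt:inducedpathsconcide}, and turn arc closedness into the cardinality test the algorithm performs. Your ``inertness'' lemma for length-two paths is a welcome addition: the paper's prose compares the path count against $|P|$ while the algorithm compares against $k=|Q|$, and your observation that degree-one hosts make length-two paths irrelevant to the closure is exactly what is needed to reconcile the two.
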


Consider the example in Figure~\ref{fig:loops}, the graph $D(P)$ constructed by Algorithm~\ref{algo:checking} is the following. The set of vertices is $\{source, (h_0, s_1), (s_1, s_2), (s_1, s_4), (s_2, s_3), (s_5, s_2), (s_3, s_4), (s_3, s_6), (s_4, s_5), (s_5, s_6), (s_6, h_1), sink\}$ and the corresponding graph is shown in Figure~\ref{fig:D(P)}. By direct inspection one can assure that there is a cycle $(s_2, s_3), (s_3, s_4), (s_4, s_5), (s_5, s_2)$ in the graph and thus, the number of paths from the vertex $source$ to the vertex $sink$ is infinite, i.e., is bigger than the number two of paths in the set $P$, and therefore, the set $P$ is not arc closed as it is demonstrated in Figure~\ref{fig:loops}.

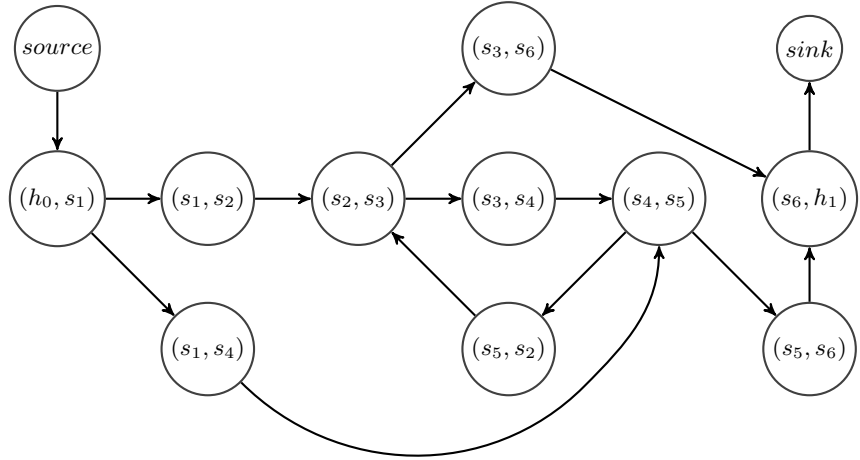
\begin{figure*}[!htb]
    \centering
    \begin{tikzpicture}[node distance=2cm,>=stealth',bend angle=45,auto]
            \tikzstyle{node}=[circle,thick,draw=black!75, inner sep=0.075cm]
            \tikzstyle{normalArrow}=[thick,->]
           
            \node[node]                 (source)    {$source$};
            \node[node, below of=source] (h0s1)     {$(h_0,s_1)$};
            \node[node, right of=h0s1]   (s1s2)       {$(s_1,s_2)$};
            \node[node, right of=s1s2]   (s2s3)       {$(s_2,s_3)$};
            \node[node, below of=s1s2]   (s1s4)       {$(s_1,s_4)$};
            \node[node, right of=s2s3]   (s3s4)       {$(s_3,s_4)$};
            \node[node, above of=s3s4]   (s3s6)       {$(s_3,s_6)$};
            \node[node, below of=s3s4]   (s5s2)       {$(s_5,s_2)$};
            \node[node, right of=s3s4]   (s4s5)       {$(s_4,s_5)$};
            \node[node, right of=s4s5]  (s6h1)      {$(s_6,h_1)$};
            \node[node, above of=s6h1]   (sink)       {$sink$};
            \node[node, below of=s6h1]   (s5s6)       {$(s_5,s_6)$};

             \path  (source) edge[normalArrow]  (h0s1) 
                    (h0s1) edge[normalArrow]  (s1s4)
                    (h0s1) edge[normalArrow]  (s1s2)
                    (s1s2) edge[normalArrow]  (s2s3)
                    (s2s3) edge[normalArrow]  (s3s6)
                    (s2s3) edge[normalArrow]  (s3s4)
                    (s3s4) edge[normalArrow]  (s4s5)
                    (s3s6) edge[normalArrow]  (s6h1)
                    (s5s6) edge[normalArrow]  (s6h1)
                    (s4s5) edge[normalArrow]  (s5s6)
                    (s4s5) edge[normalArrow]  (s5s2)
                    (s5s2) edge[normalArrow]  (s2s3)
                    (s6h1) edge[normalArrow]  (sink)
             ;
             \coordinate (out) at ([xshift=1cm,yshift=-0.5cm]s5s2);
             \draw [thick]  (s1s4) to [out=-45,in=225] (out) [right,thick,->,in=225,out=45] to [in=-90] (s4s5);
    \end{tikzpicture}
    \caption{Graph $D(P)$ for verifying the set of paths $P$}
    \label{fig:D(P)}
\end{figure*}

\begin{proposition} \label{proposition_checking_complexity}
The complexity of checking the absence of cycles for a given set of paths $P$ is $\mathcal{O}(L + |V|^3)$ where $|V|$ is the number of nodes in $G$ and $L$ is the sum of the lengths of the paths in $P$.
\end{proposition}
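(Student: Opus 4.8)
The plan is to bound separately the two phases of Algorithm~\ref{algo:checking}: the construction of the auxiliary digraph $D(P)$, and the analysis of $D(P)$ that decides whether the number of $source$--$sink$ paths exceeds $k$ (equivalently, by Proposition~\ref{proposition_checking}, whether $P$ fails to be arc closed and $P\!\!\downarrow\uparrow$ may contain a cyclic path). First I would show that reading the input and emitting $D(P)$ costs $\mathcal{O}(L)$, and then that inspecting $D(P)$ costs $\mathcal{O}(|V|^3)$; summing yields the claimed $\mathcal{O}(L + |V|^3)$.

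The key structural step is to bound the size of $D(P)$. Its vertices are arcs $(a,b)$ of $G$; since $G$ is simple (no multiple edges, no self-loops), there are at most $2|E| \le |V|(|V|-1)$ such arcs, so $D(P)$ has $\mathcal{O}(|V|^2)$ vertices (adding the two special nodes $source$ and $sink$ does not change this). Its arcs have the form $((a,b),(b,c))$ and are in one-to-one correspondence with the admissible triples $a\cdot b\cdot c$ sharing the middle node $b$; for a fixed $b$ both $a$ and $c$ range over the neighbours of $b$, so the number of such triples is at most $\mathbf{deg}(b)^2$. Hence the number of arcs of $D(P)$ is at most $\sum_{b\in V}\mathbf{deg}(b)^2 \le \bigl(\max_{b}\mathbf{deg}(b)\bigr)\sum_{b\in V}\mathbf{deg}(b) \le |V|\cdot 2|E| = \mathcal{O}(|V|^3)$. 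Thus $D(P)$ has $\mathcal{O}(|V|^2)$ vertices and $\mathcal{O}(|V|^3)$ arcs. I expect this degree-sum estimate to be the main obstacle, since it is what produces the cubic term; the remaining ingredients are standard linear-time graph processing.

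For the construction phase, scanning the paths of $Q\subseteq P$ once touches each arc occurrence and each consecutive-triple occurrence exactly once, which is $\mathcal{O}(L)$ work because $\sum_j k_j = L$. Using a $|V|\times|V|$ table to deduplicate the vertices (arcs) and the arcs (triples) of $D(P)$, each insertion is handled in $\mathcal{O}(1)$ after an $\mathcal{O}(|V|^2)$ initialization, so this phase runs in $\mathcal{O}(L + |V|^2)$.

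Finally, the decision phase amounts to detecting a cycle in $D(P)$ lying on some $source$--$sink$ route: if such a cycle exists the closure is infinite and the verdict is immediate, and otherwise the relevant part of $D(P)$ is a DAG in which the $source$--$sink$ paths can be counted, capping the running total at $k+1$ so that each arithmetic update stays $\mathcal{O}(1)$. Both cycle detection (via DFS or topological ordering) and DAG path counting run in time linear in the size of $D(P)$, i.e. $\mathcal{O}(|V|^2 + |V|^3) = \mathcal{O}(|V|^3)$. Adding the two phases gives $\mathcal{O}(L + |V|^2) + \mathcal{O}(|V|^3) = \mathcal{O}(L + |V|^3)$, as claimed.
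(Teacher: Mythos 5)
Your proof is correct and follows essentially the same route as the paper: build $D(P)$ in $\mathcal{O}(L)$ time, then run DFS-based cycle detection and path counting in time linear in the number of arcs of $D(P)$, which is bounded by $\mathcal{O}(|V|^3)$. You supply details the paper leaves implicit (the degree-sum bound $\sum_{b}\mathbf{deg}(b)^2 = \mathcal{O}(|V|^3)$ justifying $m \leq |V|^3$, and capping the path count at $k+1$ to keep arithmetic constant-time), but the decomposition and the key estimates are the same.
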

\begin{proof}
The complexity of constructing the graph $D(P)$ is $\mathcal{O}(L)$ where $L$ is the sum of the lengths of the paths from $P$.
In order to check for (infinite) loops, the absence of oriented cycles in the graph $D(P)$ needs to be checked, which is done through a topological sort (e.g., using depth first search (DFS) 
\cite{introductiontoalgorithms}). DFS-algorithm can also be used for computing the number of paths from the $source$ to the $sink$ node when there are no cycles. The running time of the depth first search algorithm on the graph $D(P)$ is evaluated as $\mathcal{O}(m)$, where $m$ is the number of arcs of the graph $D(P)$, $m \leq |V|^3$.
\end{proof}

\section{Debugging and Repairing a set of paths}\label{sec:debugging}
In this section, we discuss some possibilities of correcting / modifying the set of paths $P$ whenever this set is not arc closed. One first needs to identify the reason, i.e., a subset of paths that destroy the corresponding property, and the set of paths $P$ should be either augmented with new paths or on the contrary, certain paths should be deleted from the set $P$. In both ways, the resulting subset becomes arc closed and thus, can be implemented on the data plane without any additional links. We later on refer to this process as automatic $P$ \emph{debugging} and \emph{repairing}. We note, that such repairing process can have various objectives, such as for example: minimization of the number of paths to be excluded / included from / to $P$, maximization of a host to host connectivity in the resulting set of paths, minimization of the number of changes in the paths of the set, minimization of virtual links on the data plane, etc. We furthermore discuss some of the possibilities listed above and propose various debugging and repairing strategies.

\subsection{Minimizing the set of paths to be excluded / included from / to $P$}
Given a set $P$ of complete paths, let $P = \{p_1, \dots, p_k\}$, i.e., $k = |P|$, and $k_i = |p_i|-1$, i.e., $k_i$ is the length of $p_i$ for all $i \in \{1, \dots, k\}$. The problems we address in this subsection are the following: how to delete / add a minimal number of paths from / to the set $P$, such that the resulting subset / superset becomes arc closed. 

We say that two different paths $p_i$ and $p_j$ of $P$ are \emph{incompatible} if there exists a common arc, i.e., there exist $u \in \{1, \dots, k_i-1\}$ and $v \in \{1, \dots, k_j-1\}$ such that $p_i(u) = p_j(v)$ $\&$ $p_i(u + 1) = p_j(v + 1)$ while a path $p_i(1) \cdot \ldots \cdot p_i(u) \cdot p_j(v+1) \cdot \ldots \cdot p_j(k_j+1)$ or a path $p_j(1) \cdot \ldots \cdot p_j(v) \cdot p_i(u+1) \cdot \ldots \cdot p_i(k_i+1)$ is not in $P$. In this case, one can also say that $p_i$ and $p_j$ are incompatible w.r.t. the common arc $(a, b) = p_i(u), p_i(u+1)$. If $p_i$ and $p_j$ of $P$ are not incompatible, then they are \emph{compatible}.

The problem of deleting a minimal number of paths can be reduced to the well known maximum independent set problem. For that matter, we propose to derive an un-directed graph $G(P)$ in the following way: the nodes of the graph correspond to the paths of the set $P$. There is an arc between $p_i$ and $p_j$, $i \neq j$, in the graph $G(P)$ if the paths $p_i$ and $p_j$ are incompatible.

Given an un-directed graph $G(P)$, note that a subset of nodes which are not pairwise connected is an independent subset of nodes. Therefore, by construction, the following proposition holds.
\begin{proposition} \label{G(P)_proposition}
An independent subset of nodes of graph $G(P)$ is an arc closed set.
\end{proposition}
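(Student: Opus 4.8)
The plan is to unfold both definitions and reduce arc closedness of $I$ to pairwise compatibility of its paths. First I would fix an independent subset $I\subseteq P$ (a set of nodes of $G(P)$ no two of which are joined by an edge) together with two paths $p,p'\in I$ that share a common arc $(x,y)$, written $p=\alpha\cdot x\cdot y\cdot\beta$ and $p'=\alpha'\cdot x\cdot y\cdot\beta'$. Since $I$ is independent, $p$ and $p'$ are non-adjacent in $G(P)$, and by the construction of $G(P)$ this is exactly the statement that $p$ and $p'$ are \emph{compatible}. Reading off the definition of compatibility at the arc $(x,y)$ then yields that both crossover paths $\alpha\cdot x\cdot y\cdot\beta'$ and $\alpha'\cdot x\cdot y\cdot\beta$ belong to $P$. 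This is the half of the argument that is genuinely ``by construction,'' and it gives the ``no conflict inside the ambient set $P$'' property.

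The second step, which is the heart of the proof, is to upgrade membership in $P$ to membership in $I$: arc closedness of $I$ (as defined in Section~\ref{sec:impelementing}) demands that the two crossover paths lie in $I$ itself, not merely in $P$. I would establish this through an auxiliary lemma stating that each crossover path obtained from two members of $I$ is compatible with \emph{every} path of $I$, hence can be kept consistently inside the independent set. Concretely, for an arbitrary $q\in I$ and any arc shared by $q$ and the crossover $\alpha\cdot x\cdot y\cdot\beta'$, I would trace that arc to a corresponding shared arc with $p$ (if it falls in the inherited prefix $\alpha\cdot x\cdot y$) or with $p'$ (if it falls in the inherited suffix $x\cdot y\cdot\beta'$), and transport the already-known compatibility of the pair $\{p,q\}$ or $\{p',q\}$ to the crossover. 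For an independent set that is maximal (which is what the minimal-deletion optimization produces), any path compatible with all of $I$ must already belong to $I$, closing the gap; for a general independent set one instead closes $I$ under crossovers and checks that this preserves independence.

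The main obstacle I anticipate is precisely this transport of compatibility across the ``seam'' of the crossover, where the prefix is inherited from $p$ but the suffix from $p'$: one of the two induced crossover paths then glues the prefix of $q$ to a tail coming from $p'$ that is reached through an arc which need not occur in $p'$ at all, so neither $\{p,q\}$ nor $\{p',q\}$ compatibility applies verbatim. To handle this cleanly I would first treat the case where $p$ and $p'$ share a single arc, so that the prefix/suffix split is unambiguous and the bookkeeping is transparent, and only then lift to the general case by induction on the number of distinct shared arcs, at each step replacing a conflicting pair by a ``less crossed'' pair already known to be compatible (node simplicity of the paths, when available, makes each arc pin down a unique position and removes most of this difficulty). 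The single-arc case, together with the immediate converse -- an arc closed subset of $P$ is automatically independent in $G(P)$, since its required crossovers are present and themselves witness pairwise compatibility -- is what makes the reduction to the maximum independent set problem both correct and useful.
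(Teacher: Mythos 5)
Your first step is exactly what the paper's ``by construction'' amounts to: independence in $G(P)$ means pairwise compatibility, and compatibility at a shared arc $(x,y)$ means both crossover paths lie in $P$. The paper's proof goes no further; it implicitly identifies ``arc closed subset'' with ``pairwise compatible subset.'' Your second step correctly observes that this identification is not licensed by the definitions: arc closedness of a subset $I$ requires the crossovers to lie in $I$ itself, whereas compatibility is defined relative to the ambient set $P$ and only puts them in $P$. This is a genuine discrepancy, and in fact Proposition~\ref{G(P)_proposition} is false as literally stated: take hosts $h_1,h_2$ attached to a switch $s_1$, hosts $h_3,h_4$ attached to a switch $s_2$, an edge between $s_1$ and $s_2$, and let $P$ consist of all four complete paths $h_i\cdot s_1\cdot s_2\cdot h_j$. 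Then $I=\{h_1\cdot s_1\cdot s_2\cdot h_3,\ h_2\cdot s_1\cdot s_2\cdot h_4\}$ is independent in $G(P)$ (both crossovers at the arc $(s_1,s_2)$ belong to $P$), yet $I$ is not arc closed, since those crossovers are not in $I$.

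That said, your proposed repair is a plan rather than a proof, and its decisive step is missing. The auxiliary lemma --- that a crossover of two members of $I$ is compatible with every member of $I$ --- is precisely the ``seam'' difficulty you flag, and you do not resolve it; without it neither the maximality argument nor the ``close $I$ under crossovers'' variant goes through, and the latter in any case establishes a statement about a different, enlarged set. So as a proof of the proposition as written, the attempt is incomplete. What it does accomplish, and what the paper's one-line justification does not, is to expose that the statement needs adjustment before any proof can exist: either ``arc closed'' for subsets must be read as requiring the crossovers only in $P$, or the proposition must be restricted to independent subsets that are saturated under crossovers. The converse direction you note in passing (an arc closed subset is independent, since its own crossovers witness pairwise compatibility) is correct and is the half of the paper's Corollary that actually holds without qualification.
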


\begin{corollary}
A subset of $P$ is arc closed if and only if it is an independent subset of the graph $G(P)$.
\end{corollary}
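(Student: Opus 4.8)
The plan is to prove the two implications of the equivalence separately, letting $P' \subseteq P$ denote an arbitrary subset. One direction is immediate from the preceding proposition, so the real work lies in the converse, together with correctly handling the fact that \emph{incompatibility} is defined relative to the full set $P$, whereas arc closedness of $P'$ refers to membership in $P'$.

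First I would dispatch the direction ``independent $\Rightarrow$ arc closed'': if $P'$ is an independent subset of the nodes of $G(P)$, then by Proposition~\ref{G(P)_proposition} the set $P'$ is arc closed, and there is nothing more to show. This is why the corollary is stated as a corollary rather than a standalone result.

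For the converse, ``arc closed $\Rightarrow$ independent'', I would argue by contraposition. Assume $P'$ is not independent in $G(P)$; then there are two distinct paths $p_i, p_j \in P'$ joined by an edge of $G(P)$, which by construction means that $p_i$ and $p_j$ are incompatible. Unwinding the definition of incompatibility, they possess a common arc $(x,y)$ — so that $p_i = \alpha \cdot x \cdot y \cdot \beta$ and $p_j = \alpha' \cdot x \cdot y \cdot \beta'$ — while at least one of the swapped paths $\alpha \cdot x \cdot y \cdot \beta'$ or $\alpha' \cdot x \cdot y \cdot \beta$ fails to lie in $P$.

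The key step, and the one demanding the most care, is to transfer this failure from $P$ down to $P'$. Since $P' \subseteq P$, any path absent from $P$ is a fortiori absent from $P'$. Hence $P'$ contains the two paths $p_i$ and $p_j$ sharing the arc $(x,y)$, yet one of the required swaps is missing from $P'$; by the definition of closure with respect to an arc, $P'$ is then not closed w.r.t. $(x,y)$, and therefore not arc closed. Taking the contrapositive yields ``arc closed $\Rightarrow$ independent'', and combined with the first implication this establishes the claimed equivalence. I expect the only genuine subtlety to be exactly this $P$-versus-$P'$ bookkeeping, which hinges on the inclusion $P' \subseteq P$; the remainder is a direct unwinding of the definitions of incompatibility and of an arc closed set.
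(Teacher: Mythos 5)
Your proof of the direction ``arc closed $\Rightarrow$ independent'' is correct and carefully done: incompatibility of $p_i, p_j \in P'$ exhibits a common arc one of whose swaps is absent from $P$, hence absent from the subset $P' \subseteq P$, so $P'$ is not closed w.r.t.\ that arc. The paper itself offers no more justification than ``by construction,'' so on this half you are more explicit than the source.

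The gap is in the direction you dismiss as immediate. The $P$-versus-$P'$ bookkeeping that you rightly identify as the crux cuts the other way there, and it does not resolve favorably: compatibility of $p_i$ and $p_j$ only guarantees that the swapped paths lie in $P$, not in the subset $P'$, and the inclusion $P' \subseteq P$ transfers \emph{absence} downward but does not transfer \emph{presence} downward. Concretely, Proposition~\ref{G(P)_proposition} (and hence the corollary) fails as literally stated: take $P = \{\alpha \cdot x \cdot y \cdot \beta, \; \alpha' \cdot x \cdot y \cdot \beta', \; \alpha \cdot x \cdot y \cdot \beta', \; \alpha' \cdot x \cdot y \cdot \beta\}$ with all four complete paths present, so that every pair is compatible and $G(P)$ has no edges; then $P' = \{\alpha \cdot x \cdot y \cdot \beta, \; \alpha' \cdot x \cdot y \cdot \beta'\}$ is an independent subset of $G(P)$ but is not arc closed, since neither swap belongs to $P'$. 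Only the implication you actually prove holds in general; the converse would require either defining incompatibility relative to the candidate subset or restricting attention to subsets already containing all swaps of their members. This defect is inherited from the paper, but since the statement is a biconditional, citing the proposition does not close the forward direction, and your own diagnosis of the ``only genuine subtlety'' should have been applied symmetrically to both halves.
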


Therefore, the problem of minimizing the set of paths to be excluded from $P$ is reduced to the derivation of a maximal independent subset of nodes in $G(P)$. Note that this problem is known to be NP-hard, and thus the repairing approach can be more complex than that one presented for the verification itself (Section~\ref{sec:checking}).

As an example, consider again the paths of the set $P$ in Figure~\ref{fig:loops}. Note that the paths from $P$ possess the necessary feature, i.e., they have a common arc $(s_2, s_3)$ with the above property and the set $P$ has no path $h_0 \cdot s_1 \cdot s_2 \cdot s_3 \cdot s_4 \cdot s_5 \cdot s_2 \cdot s_3 \cdot s_4 \cdot s_5 \cdot s_6 \cdot h_1$. Therefore, the corresponding vertices in $G(P)$ are connected, i.e., $P$ is not arc closed and only the singletons $\{\alpha\}$ or $\{\beta\}$ are arc closed.

For deriving a minimal superset of $P$ that is arc closed, the graph $D(P)$ derived in the previous subsection can be used. If the graph returned by Algorithm~\ref{algo:checking} has no cycles then the set of all paths from the $source$ node to the $sink$ node is the smallest superset of $P$ that is arc closed. Correspondingly, the following statement holds.

\begin{proposition} 
1. If all the paths from the $source$ node to the $sink$ node in $G(P)$ are edge simple then the set of all paths is the smallest superset of $P$ that is arc closed. 2. If there a path from the $source$ node to the $sink$ node in $G(P)$ that is not edge simple then there is no finite superset of $P$ that is arc closed. 
\end{proposition}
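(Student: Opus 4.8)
The plan is to deduce both parts from the correspondence already fixed for the directed graph built by Algorithm~\ref{algo:checking} (written $D(P)$ in the preceding paragraph, and referred to as $G(P)$ in the statement) together with Proposition~\ref{stmt:arcclosure}. Recall that the complete paths of $G$ that correspond to the directed source-to-sink walks of $D(P)$ are exactly the paths of $P\!\!\downarrow\uparrow$, and that by Proposition~\ref{stmt:arcclosure} the set $P\!\!\downarrow\uparrow$ is the arc closure of $P$, i.e.\ the smallest arc closed set of complete paths containing $P$. The bridge between the two descriptions is the following observation: the vertices that a source-to-sink walk of $D(P)$ visits are precisely the arcs of the complete path it encodes in $G$; hence that complete path is edge simple if and only if the walk visits no vertex of $D(P)$ twice, i.e.\ it is a (vertex-)simple directed path of $D(P)$.

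First I would dispose of Part~1. Assuming every source-to-sink path of $D(P)$ is edge simple, by the observation above no source-to-sink walk repeats a vertex, so $D(P)$ carries no directed cycle on any route from $source$ to $sink$. Consequently there are only finitely many source-to-sink paths, and the set of the complete paths they encode is a finite, well-defined set of edge-simple complete paths. By the correspondence this set is exactly $P\!\!\downarrow\uparrow$, which by Proposition~\ref{stmt:arcclosure} is the smallest arc closed superset of $P$; this is the assertion of Part~1.

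For Part~2 I would argue through infiniteness of the closure. Suppose some source-to-sink path of $D(P)$ is not edge simple. By the observation its underlying walk repeats a vertex $v$, so the segment between the two visits of $v$ is a directed cycle $C$ of $D(P)$ that is reachable from $source$ (along the prefix) and from which $sink$ is reachable (along the suffix). Traversing $C$ exactly $n$ times for $n = 1, 2, 3, \dots$ produces infinitely many pairwise distinct source-to-sink walks of strictly increasing length, hence, by the correspondence, infinitely many pairwise distinct complete paths of $G$, all lying in $P\!\!\downarrow\uparrow$. Therefore $P\!\!\downarrow\uparrow$ is infinite. Since every arc closed set containing $P$ must contain the arc closure $P\!\!\downarrow\uparrow$ (the latter being, by definition, the smallest such set), every arc closed superset of $P$ is infinite, and no finite one exists.

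The step I expect to cost the most care is the equivalence between edge simplicity of the encoded complete path in $G$ and vertex-simplicity of the corresponding walk in $D(P)$, and, resting on it, the inference that a single non-edge-simple source-to-sink path already guarantees a cycle sitting on a source-to-sink route and thus infinitely many distinct closure paths. Once this is in place, both conclusions follow mechanically from Proposition~\ref{stmt:arcclosure} and the minimality of the arc closure.
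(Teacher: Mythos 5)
Your proof is correct and follows essentially the same route the paper intends: the paper offers no explicit proof beyond asserting that the statement follows from the construction of $D(P)$ (which the proposition misnames $G(P)$, a slip you rightly correct) and the fact that the source-to-sink paths of $D(P)$ encode exactly the arc closure $P\!\!\downarrow\uparrow$ from Proposition~\ref{stmt:arcclosure}. Your explicit observation that edge simplicity of an encoded complete path in $G$ corresponds to vertex-simplicity of the associated walk in $D(P)$, and the pumping argument yielding infinitely many closure paths in case 2, are exactly the details the paper leaves implicit.
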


Note that in case 2, it is not possible to add paths to the given set $P$; the set $P$ can be only reduced as it is discussed at the beginning of the subsection. Indeed, it is exactly the case for the set $P$ in Figure~\ref{fig:loops}.

\subsection{Minimizing the number of arc changes in the set $P$}
Consider a set $P$ of edge simple complete paths that is not arc closed, the question arises: can the paths of the set be minimally corrected (w.r.t. the number of arcs) in order to get an arc closed set preserving the head and tail hosts of each path? In this section, we propose a simple way for modifying a single edge or a sub-path of a path using edges of the RNCT graph $G$ which were not utilized in the paths of $P$ (the set $N$ in Algorithm~\ref{algo:debugging}). 

\begin{algorithm}[!htb]
\SetKwInOut{Input}{Input}\SetKwInOut{Output}{Output}\SetKw{KwBy}{by}
    \Input{A set $P$ of edge-simple complete paths that is not arc closed, a non-empty set $N$ of edges between switches of the RNCT graph $G$ which are not used in the paths of the set $P$}
    \Output{A verdict $False$ if paths cannot be modified, or a modified arc closed set $P$ where the head and tail vertices of each modified path $p_j'$ coincide with those of the initial path $p_j$ of $P$}
    
    $Q = \{p_1\}$\;
    $j = 2$\; 

    \While {$j \leq |P|$}
    {
        $p_j' = p_j$\; 
        $l = 1$\; 

        \While {$l \leq |Q|$}
        {
            $p=q_l$\;
            \If {paths $p$ and $p_j'$ are incompatible w.r.t. $P$} 
            {   
                \If {$N = \emptyset$}
                {
                    \Return $False$\;
                }
                \Else 
                {
                    \While {the paths $p$ and $p_j'$ are incompatible w.r.t. the common arc $(s_1, s_2)$}
                    {
                        \uIf {the paths $p$ and $p_j'$ have a common sub-path $s_3 \cdot \alpha \cdot s_1 \cdot s_2 \cdot \beta \cdot s_4$ and $(s_3, s_4)$ is in $N$}
                        {
                            Derive $p_j'$ by replacing a sub-path $s_3 \cdot \alpha \cdot s_1 \cdot s_2 \cdot \beta \cdot s_4$ in $p_j'$ by a sub-path $s_3 \cdot s_4$\;
                        Delete $(s_3, s_4)$ from the set $N$\;
                        }
                        \uElseIf {There is a switch $s_3$ such that $(s_1, s_3), (s_3, s_2) \in N$ }
                        {
                            Derive $p_j'$ by replacing a sub-path $s_1 \cdot s_2$ in $p$ by a sub-path $s_1 \cdot s_3 \cdot s_2$\;
                            Delete $(s_1, s_3)$ and $(s_3, s_2)$ from the set $N$\; 
                        }
                        \uElse 
                        {
                            \Return $False$\;
                        }
                    }
                }
           }
           $l++$\;
        }
        Add $p_j'$ to the set $Q$\; 
        $j++$\;
    }

\Return an arc closed set $Q=\{p_1, p_2', \dots, p_k'\}$
\caption{Repairing via modifying an edge or a sub-path preserving the head and tail hosts of the path} \label{algo:debugging}
\end{algorithm}

By construction, the following statement holds.
\begin{proposition} \label{proposition_minimizing_arcs}
Given a set $P$ of edge-simple complete paths, if Algorithm~\ref{algo:debugging} returns a set $Q = \{p_1, p_2', \dots, p_k'\}$ then this set is arc closed and for each $j \in \{1, \dots, k\}$, the head and tail vertices of $p_j'$ coincide with those of $p_j$. 
\end{proposition}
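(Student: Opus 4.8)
The plan is to prove partial correctness of Algorithm~\ref{algo:debugging} by a loop invariant on the outer \textbf{while} loop. Concretely, I would show that whenever the algorithm reaches the bottom of an outer iteration without having returned $False$, the current set $Q$ satisfies two invariants: (I1) $Q$ is arc closed, and (I2) for every path $p_i'$ already placed in $Q$, the head and tail vertices of $p_i'$ coincide with those of $p_i$. Since the algorithm returns $Q$ only after the outer loop finishes, establishing that these invariants are preserved gives exactly the two claims of the proposition. The base case is immediate: initially $Q = \{p_1\}$, a single path is vacuously arc closed (there is no pair sharing an arc on which to test the closure condition), and $p_1' = p_1$ trivially preserves the endpoints.

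For the inductive step I would assume $Q = \{p_1, p_2', \dots, p_{j-1}'\}$ is arc closed with (I2), and then track what the inner machinery does to the freshly initialized $p_j' = p_j$. Invariant (I2) for $p_j'$ is the easy half: every modification performed in the two branches replaces an \emph{internal} sub-path by another, where the fixed boundary vertices ($s_3, s_4$ in the first branch, $s_1, s_2$ in the second) are switches, because the substituted edges are drawn from $N$, the set of switch-to-switch edges. Consequently each modification leaves the first and last nodes of $p_j'$ --- the host endpoints --- untouched; by a trivial induction on the number of modifications applied to $p_j'$, its head and tail stay equal to those of $p_j$.

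The substantial half is (I1). Here I would argue that after the inner \textbf{while} loop over $Q$ terminates without a $False$, the path $p_j'$ is compatible with every $q_l \in Q$, so that $Q \cup \{p_j'\}$ remains arc closed by the inductive hypothesis. The inner loop processes the $q_l$ one by one and, for each, keeps rerouting $p_j'$ until it shares no arc with $q_l$ on which the two are incompatible. Two structural facts drive the argument. First, each rerouting deletes the offending arc $(s_1, s_2)$ from $p_j'$ and splices in edges taken from $N$ that are immediately removed from $N$; since these edges were not used in any path of $Q$ nor elsewhere in $p_j'$, the newly created arcs are globally fresh and hence cannot be shared with any existing path. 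Second, a modification only \emph{removes} old arcs of $p_j'$ and inserts fresh ones, so the set of arcs shared between $p_j'$ and any already-processed $q_{l'}$ (with $l' < l$) can only shrink. Together these show that repairing the pair $(q_l, p_j')$ neither creates a new incompatibility with $q_l$ (the fresh arcs are unshared) nor undoes the compatibility previously secured with $q_{l'}$, and that the inner loop makes genuine progress (each step consumes at least one edge from the finite set $N$ and discharges one incompatible common arc), so it terminates.

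The step I expect to be the main obstacle is precisely this monotonicity claim --- that a rerouting performed to repair the pair $(q_l, p_j')$ cannot silently re-break an earlier pair --- together with reconciling the algorithm's test ``incompatible w.r.t. $P$'' with the arc-closure condition that must hold \emph{inside the constructed set} $Q$. Making the latter rigorous requires checking that the only arcs $p_j'$ ends up sharing with a path of $Q$ are arcs on which the relevant crossed paths are themselves present in $Q$, so that no residual violation of the closure condition survives; the boundary case of an arc incident to a host, where rerouting through $N$ is unavailable, must be handled separately and is the place where the argument is most delicate.
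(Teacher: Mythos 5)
Your plan follows essentially the same route as the paper, which in fact offers almost no proof at all: the statement is introduced with ``by construction'' and justified only by the remark that every replacement uses previously unused links from $N$, whence the repaired paths are edge simple and the set is arc closed. Your outer-loop invariant (I1)--(I2) is a faithful and much more explicit elaboration of that one-sentence argument, and your ``two structural facts'' --- that spliced-in edges come from $N$ and are therefore arcs of no path of $P$ and of no already-built $q_{l'}$, and that a rerouting only deletes old arcs of $p_j'$ and inserts globally fresh ones --- are exactly what makes the construction work. In particular, the monotonicity worry you single out (that repairing the pair $(q_l, p_j')$ could re-break an earlier pair) is in fact fully discharged by that freshness argument: the set of arcs shared between $p_j'$ and any $q_{l'}$ can only shrink, so no new incompatibility can appear. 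Your (I2) argument is also sound, since $N$ contains only switch-to-switch edges and both replacement branches keep the boundary vertices fixed.

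The one obstacle you flag but do not resolve is a genuine gap --- and it is a gap the paper shares. The algorithm's guard tests whether $p$ and $p_j'$ are ``incompatible w.r.t.\ $P$,'' but the proposition asserts that $Q$ itself is arc closed, i.e.\ that crossovers at shared arcs lie in $Q$, not in $P$; since $Q$ consists of \emph{modified} paths, membership of crossovers in $P$ says nothing directly about membership in $Q$. The natural way to close this is to observe that for modified paths the crossovers at a shared arc are generically not in $P$, so the inner loop effectively runs until $p$ and $p_j'$ share \emph{no} arc at all, at which point arc closure is vacuous for that pair; the residual case of two unmodified paths of $P$ sharing an arc compatibly (crossovers in $P$) still needs a separate argument that those crossovers survive into $Q$. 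Likewise, the boundary case of a shared arc incident to a host should be handled explicitly: since $N$ contains only switch--switch edges, neither rerouting branch applies there and the algorithm returns $False$, so this case never reaches the returned set. Your proposal is correct in outline, but these two flagged points must be turned into actual arguments rather than left as anticipated difficulties; as written, the proof is a plan with the hardest step still open.
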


Note that the set of repaired paths returned by Algorithm~\ref{algo:debugging} has only edge simple paths, since every time only unused links are utilized for the replacement. For the same reason, this set is arc closed. Moreover, we consider only simple heuristics for repairing a path; note as well that the result significantly depends on the order of the paths in $P$. More research is needed to propose more rigorous conditions for repairing a set of initial paths that is not arc closed. Those conditions can be related to certain properties as the link load distribution and thus, could re-direct some packets, for example, for traffic optimization.

As an example, consider again the paths in Figure~\ref{fig:loops}, assuming that each pair of switches is connected in the RNCT $G$. These paths have a common arc $(s_4, s_5)$ that can be replaced by a path $s_4 \cdot s_6 \cdot s_5$. After this modification the paths have a common arc $(s_2, s_3)$ that can be replaced by a path $s_2 \cdot s_4 \cdot s_3$. Thus, we obtain an arc closed set of paths $P' = \{h_0 \cdot s_1 \cdot s_2 \cdot s_3 \cdot s_4 \cdot s_5 \cdot s_6 \cdot h_1, h_0 \cdot s_1 \cdot s_4 \cdot s_6 \cdot s_5 \cdot s_2 \cdot s_4 \cdot s_3 \cdot s_6 \cdot h_1\}$.

\section{Conclusion}\label{sec:conc}
In this paper, we discussed some implementability issues for a given set of paths on an SDN data plane. We showed that for a fixed traffic type, whenever the requested set contains only edge simple paths, more (unintended) paths can still be implemented on the data plane, and some of those can create cycles, i.e., infinite packet loops. We therefore established the necessary and sufficient conditions for a set of requested paths to be implemented without any undesired connections and hence, potential loops. Our preventive verification approach is based on the analysis of the set of paths to be arc closed that in fact guarantees its `clean' (exact) implementability; this can be useful for guaranteeing that new (requested) and preexisting paths form valid configurations. The estimated (polynomial w.r.t. the total paths’ length) complexity of the proposed approach makes believing in its applicability for large scale virtual networks. At the same time, for a set of paths that cannot be implemented directly on the data plane, we proposed a debugging and repairing approaches for correcting the initial request, such that the resulting set becomes arc closed.

As future work, we plan to extend the proposed approaches abstracting from a given traffic type, i.e., considering sets of paths that share certain parameters of the packet header. Complexity issues in this case form maybe the main challenge, and thus we plan to study certain properties of various headers’ partitioning to check the implementability of a given set of paths. Moreover, it can be interesting to consider other kinds of specifications for user requests, such as for example, given pairs of hosts to be connected on the data plane, one needs to face the implementability challenges again. Finally, we also plan to verify different functional and non-functional properties of the set of paths to be implemented, for example, to check security / isolation issues.
%
%
\bibliographystyle{splncs04}
\bibliography{references}

\end{document}